\title{Optimal initial condition of passive tracers for 
their maximal mixing in finite time}
\author{Mohammad Farazmand\thanks{Email address: mfaraz@mit.edu}\\
{\small Department of Mechanical Engineering,}
{\small Massachusetts Institute of Technology}\\
{\small77 Massachusetts Ave., Cambridge MA, USA}
}
\date{}
\newcommand{\id}{\mathrm d}
\newcommand{\vc}{\mathbf}
\newcommand{\bnabla}{\pmb\nabla}
\newcommand{\fm}{\varphi}
\newcommand{\spn}{\mbox{span}}
\newcommand{\proj}{\Pi}
\newcommand{\fopt}{f_{\mbox{\tiny opt}}}
\newcommand{\Vi}{V_{I}}
\newcommand{\Vnu}{V_{\nu}}
\renewcommand{\i}{\mbox{i}}
\theoremstyle{definition}
\newtheorem{thm}{Theorem}
\newtheorem{rem}{Remark}
\newtheorem{defn}{Definition}
\begin{document}
%%% PRS A
%\title{Optimal initial condition of passive tracers for 
%their maximal mixing in finite time}
%\author{M. Farazmand}
%\address{Department of Mechanical Engineering\\
%Massachusetts Institute of Technology\\
%77 Massachusetts Ave., Cambridge MA, USA}
%\subject{Chaos theory, Fluid mechanics}
%\keywords{chaotic advection, mixing, PF operator, functional analysis, optimization}
%\corres{mfaraz@mit.edu}
%\begin{fmtext}
%\end{fmtext}

\maketitle

\begin{abstract}
The efficiency of a fluid mixing device is often limited by fundamental 
laws and/or design constraints, such that a perfectly homogeneous 
mixture cannot be obtained in finite time.
Here, we address the natural corollary question: Given the best available mixer, what is
the optimal initial tracer pattern that leads to the most homogeneous mixture
after a prescribed finite time?
For ideal passive tracers, we show that this optimal initial condition coincides with the right singular vector 
(corresponding to the smallest singular value) of 
a suitably truncated Perron-Frobenius (PF) operator. 
The truncation of the PF operator is made under the assumption that there is
a small length-scale threshold $\ell_\nu$ under which the tracer blobs are considered, 
for all practical purposes, completely mixed. We demonstrate our results on two examples: a prototypical model
known as the sine flow and a direct numerical simulation of two-dimensional turbulence. 
Evaluating the optimal initial condition through this framework only requires the 
position of a dense grid of fluid particles at the final instance and their preimages at the initial 
instance of the prescribed time interval. As such, our framework can be 
readily applied to flows where such data is available through numerical simulations or
experimental measurements. 
\end{abstract}

\section{Introduction}
Given a fluid velocity field $\vc u(\vc x,t)$, a passive tracer satisfies the linear advection equation
\begin{equation}
\partial_{t}\rho +\vc u\cdot \bnabla \rho=0,\quad \rho(\vc x,t_0)=f(\vc x)
\label{eq:adveq}
\end{equation}
where the scalar field $\rho(\vc x,t)$ denotes the concentration of the tracer at time $t$
and $f$ is its initial concentration at time $t_0$. \citet{aref} pointed out that 
laminar unsteady velocity fields can, over time, develop complex tracer patterns consisting of ever smaller scales. 
This observation has inspired the successful development of many stirring protocols to enhance mixing 
in engineered devices (see, e.g.,~\cite{stroock02,gouillart06,mathew07,singh08,thiffeault08,gubanov10,foures14}).

Systematic classification of mixing efficiency of fluid flow, however, is relatively recent.
This classification was initiated by~\cite{lin11} who derived rigorous bounds on the mixing
efficiency of velocity fields with a prescribed stirring energy or stirring power budget. 
A notable outcome of their program is the rather remarkable discovery of a 
finite-energy velocity field ($\|\vc u\|_{L^2}=const.<\infty$) that achieves 
perfect mixing in finite time~\citep{lunasin12}. It was shown later, however, that any such velocity 
field must have infinite viscous dissipation, i.e. $\|\bnabla\vc u\|_{L^2}=\infty$~\citep{seis13,Iyer14}.

Besides this fundamental limitation, the implementation of mathematically obtained 
optimal stirring strategies in a mixing device is not always feasible due to, for instance, 
geometric constraints. 
The problem is more acute in natural fluid flow (such as geophysical flows or the blood stream)
over which we have virtually no control. 
%It is well-known that such flows, even at high Reynolds numbers, 
%contain long-lived coherent vortices~\citep{mcwilliams1984vortices,provenzale99}. 
%These vortices exhibit minimal material deformation
%over their lifetime, and therefore impede efficient mixing. 
%This is accentuated by recent revelations that some of these vortices
%remain perfectly coherent over long periods of time~\citep{bhEddy}.

In light of the above discussion, the natural question is: 
\begin{displayquote}
(Q) \emph{
Given an unsteady velocity field, what is the optimal initial tracer pattern
that leads to the most homogeneous mixture after a prescribed finite time?
}
\end{displayquote}
In spite of its importance, this question has received relatively little attention. \cite{hobbs97}
carried out a case study where the effect of the tracer injection location in a Kenics mixer is
examined. They find that, at least for short time horizons, the mixing efficiency depends
significantly on the injection location. A similar case study
is carried out by \cite{gubanov09} who studied the mixing efficiency of 
five different initial tracer patterns in a two-dimensional nonlinear model, known as the sine flow.

\cite{thiffeault08b} addressed an analogous question:
the asymptotic mixing of passive tracers advected under a steady velocity field
where the tracer is injected continuously into the flow via source terms. Through a variational approach, they
determined the optimal distribution of the sources (also see~\cite{okabe08}, for related numerical results).

Here, we address the finite-time mixing of passive tracers advected by fully unsteady
velocity fields, as formulated in~\eqref{eq:adveq}. 
Specifically, we seek the optimal initial condition $f$ that leads to the 
most homogeneous mixture after a given finite time. 
To the best of our knowledge, a rigorous method for determining this optimal initial condition 
is missing. 

Problem (Q) can, in principle, be formulated and solved as an infinite-dimensional
optimization problem, where the optimal initial condition coincides with the minimizer
of an appropriate cost functional. Such minimizers are typically 
obtained by iterative methods of adjoint-based 
optimization~\citep{protas08,faraz_cont}. This is, however, computationally prohibitive since it requires the backward-time 
integration of an adjoint partial differential equation (PDE) at each iteration.

Here, we show that under reasonable assumptions, the problem
reduces to a finite-dimensional one that can be readily solved at a relatively low computational cost.
To obtain this finite-dimensional reduction, we assume that tracer blobs
smaller than a small, prescribed length-scale $\ell_\nu$ are considered completely mixed for all practical 
purposes. This assumption, that is made precise in Section~\ref{sec:theory}, 
results in a natural Galerkin truncation of the Perron--Frobenius (PF) operator
associated with the advection equation~\eqref{eq:adveq}. We show that the optimal initial condition $f$
then coincides with a singular vector of the truncated PF operator. 

Our results complement the transfer operator-based methods for detecting finite-time 
coherent sets in unsteady fluid flows (\cite{froyland}; also see~\cite{dellnitz99,froyland07,froyland10,williams15}). 
Coherent sets refer to subsets of the fluid which exhibit minimal deformation under advection and therefore inhibit efficient mixing of tracers 
with the surrounding fluid. Our aim here is the opposite, namely, initially large-scale structures that 
under advection deform mostly into small-scale filaments.

The remainder of the paper is organized as follows. In section~\ref{sec:prelim}, we introduce some basic
notation and definitions. Section~\ref{sec:theory} contains our main results and
section~\ref{sec:numerics} details their numerical implementation. In section~\ref{sec:results}, the
results are demonstrated on two examples.

\section{Preliminaries}\label{sec:prelim}
Consider an unsteady, incompressible velocity field $\vc u(\vc x,t)$ defined over a 
bounded open subset $\mathcal D\subset \mathbb R^d$ where 
$d=2$ or $d=3$ for two- and three-dimensional flows, respectively.
The trajectories $\vc x(t;t_0,\vc x_0)$ of the fluid particles satisfy the ordinary differential equation
\begin{equation}
\dot{\vc x}=\vc u(\vc x,t),\quad t\in\mathbb R,
\label{eq:ode}
\end{equation}
where $\vc x(t;t_0,\vc x_0)$ denotes the time--$t$ position of the particle starting from the initial position $\vc x_0$ at time $t_0$.
If the velocity field is sufficiently smooth, there exists a two-parameter family of homeomorphisms
$\fm_{s}^t$ (the flow map)  such that $\vc x(t;s,\vc x_0)=\fm_{s}^t(\vc x_0)$ for all times $t$ and $s$.
As our interest here is in finite-time mixing, we restrict our attention to a prescribed finite time interval 
$[t_0,t_0+T]$ of interest. The flow map $\fm_{t_0}^{t_0+T}$ takes the initial position $\vc x_0$ 
of a fluid particle at time $t_0$ to its final position at time $t_0+T$.
Since the finite time interval is fixed, we drop the dependence of the flow map on $t_0$ and $t_0+T$, 
and write $\fm$ for notational simplicity.

Let $\rho(\vc x,t)$ denote the concentration of a passive tracer, 
i.e. $\rho$ satisfies equation~\eqref{eq:adveq}. 
Since the passive tracer is conserved along
fluid trajectories, we have
\begin{equation}
\rho(\vc x,t_0+T)=\rho(\fm^{-1}(\vc x),t_0)=f\circ \fm^{-1}(\vc x),
\label{eq:rhoIsPassive}
\end{equation}
for all $\vc x\in \mathcal D$. Note that since the flow map is a homeomorphisms, the inverse $\fm^{-1}$ is
well-defined. Equation~\eqref{eq:rhoIsPassive} motivates the definition of the Perron-Frobenius (PF) operator.

\begin{defn}[Perron--Frobenius operator]
The Perron--Frobenius operator associated with the flow map $\fm:\mathcal D\to \mathcal D$ is the linear transformation 
$\mathcal P:L^2(\mathcal D)\to L^2(\mathcal D)$ 
such that, for all $f\in L^2(\mathcal D)$,
\begin{equation}
(\mathcal P f)(\vc x)=f\circ \fm^{-1}(\vc x),\quad \forall\vc x\in\mathcal D.
\label{eq:PF}
\end{equation}
\label{def:PF}
\end{defn}

%Initially introduced by~\cite{koopman31}, the application of this operator has become 
%widespread in dynamical systems theory, and more recently in fluid dynamics 
%(see~\cite{budisic12} and~\cite{mezic13}, for reviews).
%We have slightly modified PF's original definition which read $\mathcal P f=f\circ\fm$.

The evolution of passive tracers can be described by the action of the PF operator on their initial 
conditions. More specifically, for the passive tracer $\rho$ described above, we have
\begin{equation}
\rho(\vc x,t_0+T)=(\mathcal Pf)(\vc x),
\end{equation}
for all $\vc x\in\mathcal D$ (cf. equation~\eqref{eq:rhoIsPassive}). 

We point out that there is a more general definition of the PF operator applicable to 
non-invertible dynamics (see Definition 3.2.3 of~\cite{mackey1994chaos}). 
In the special case where the flow map $\fm$ is invertible and volume-preserving, the general definition 
is equivalent to Definition~\ref{def:PF} above (Corollary 3.2.1 in~\cite{mackey1994chaos};~\cite{froyland2009}).

For incompressible flow, the 
PF operator is a unitary transformation with respect to the $L^2(\mathcal D)$ inner product 
$\langle\cdot,\cdot\rangle_{L^2}$. As a consequence, the $L^2$-norm $\|\rho(\cdot,t)\|_{L^2}$ of the tracer remains 
invariant under advection. Furthermore, the spatial average of the tracer is an invariant. 
Without loss of generality, one can assume that this spatial average vanishes,
$\int_{\mathcal D}\rho(\vc x,t)\id \vc x=0$ \citep{lin11}.

There has been several attempts to detect coherent structures in unsteady fluid flows using approximations of
the PF operator~\citep{froyland07,santitissadeekorn10,froyland10}. \cite{froyland} puts these approaches on a 
mathematically rigorous basis by composing the PF operator with diffusion operators. The resulting diffusive PF operator
is compact and has a well-defined singular value decomposition (SVD). \cite{froyland} shows that 
a singular vector, corresponding to the largest non-unit singular value of the diffusive PF operator, 
can reveal minimally dispersive subsets of the fluid that remain coherent and thereby inhibit
mixing (also see~\cite{froyland14}). 
Our goal here, however, is the opposite as we seek passive tracer initial conditions that mix 
most efficiently with their surrounding fluid. 

\section{Optimal initial conditions}\label{sec:theory}
\subsection{Physical considerations}
Given an initial tracer distribution, a reasonable mixer will generically
deform the tracer through stretching and folding of material elements
such that, over time, it develops ever smaller length scales.
It is, therefore, desirable to release the tracer initially into smallest possible scales.
In practice, the initially available range of scales into which the tracer may be released
is limited to relatively large scales. We denote this
large length-scale limit by $\ell_I$ (see figure~\ref{fig:schem_scales}, for an illustration).
It is left then to the fluid flow to transform the initially large-scale 
blobs of tracer to small filaments through a stretch-and-fold mechanism.
\begin{figure}
\centering
\includegraphics[width=.95\textwidth]{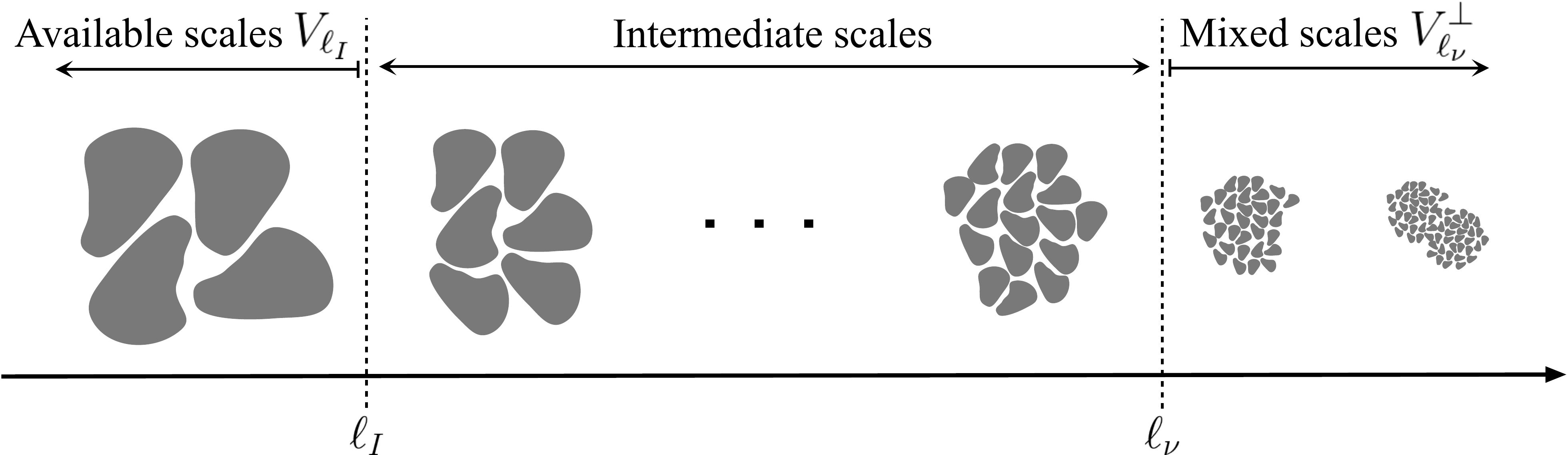}
\caption{An illustration of the initially available scales (larger than $\ell_I$)
and the mixed scales (smaller than $\ell_\nu$).}
\label{fig:schem_scales}
\end{figure}

On the other hand, we assume that there is a small length scale threshold $\ell_\nu\ll \ell_I$,
under which the tracer is considered, for all practical purposes, completely mixed. An efficient mixer, 
therefore, transfers the tracer distribution from large, initially available scales $\ell\geq\ell_I$
to the mixed scales $\ell<\ell_\nu$. In the next section, we make these statements precise.

\subsection{Mathematical formulation}
We consider a set of functions $\{\phi_j\}_{j\geq 1}$ forming a complete, orthonormal
basis for the space of square integrable functions $L^2(\mathcal D)$.
%\begin{comment}
That is $\langle\phi_i,\phi_j\rangle_{L^2}=\delta_{ij}$
and for any $f\in L^{2}(\mathcal D)$ there are constants $\alpha_j\in\mathbb R$ such that
$\lim_{k\to \infty}\|f-\sum_{j=1}^{k}\alpha_j\phi_j\|_{L^2}=0$.
%\end{comment}

We also assume that there is a length scale $\ell_j$ associated to each function $\phi_j$, and that
they are ordered such that the sequence $(\ell_1,\ell_2,\cdots)$ is decreasing. In other words, the length scale
associated with the function $\phi_j$ decreases as $j$ increases. Such a basis can be taken, for instance, 
to be Fourier modes or wavelets~\citep{walnut13}.

With this basis, we can mathematically model the subspace of initial conditions $\Vi$.
The subspace $\Vi$ consists of all scalar functions $f$ whose smallest length scale is larger than or equal to $\ell_I$. 
Since the basis $\{\phi_j\}_{j\geq 1}$ is ordered, there is a positive integers $n$ such that 
\begin{equation}
\Vi= \spn\{\phi_1,\phi_2,\cdots,\phi_n\}= \{\mbox{Initially available length scales}\ \ell\geq\ell_I \}.
\end{equation}

The subspace of unmixed length scales $\Vnu$ can be modeled similarly using 
a basis $\{\psi_i\}_{i\geq 1}$ for $L^2(\mathcal D)$. 
We assume that this basis is also orthonormal, complete and associated with a
decreasing sequence of length scales.
The subspace $\Vnu$ consists of all scalar functions whose smallest length scale is larger than or equal to
the unmixed length scale $\ell_\nu$. Therefore, there is $N\gg n$ such that 
\begin{equation}
\Vnu= \spn\{\psi_1,\psi_2,\cdots,\psi_N\}=\{\mbox{Unmixed length scales}\ \ell\geq\ell_\nu \}.
\end{equation}
Note that the bases $\{\psi_i\}_{i\geq 1}$
and $\{\phi_i\}_{i\geq 1}$ can be taken to be identical, but this is not necessary here.

We denote the orthogonal complement of $\Vnu$ by $\Vnu^\perp$. In terms of the basis 
functions, we have 
\begin{equation}
\Vnu^\perp=\overline{\spn\{\psi_{N+1},\psi_{N+2},\cdots\}}=\{\mbox{Mixed length scales}\ \ell<\ell_\nu\},
\end{equation} 
where the overline denotes 
closure in the $L^2$ topology. The space $\Vnu^\perp$ consists of functions that 
only contain the mixed scales, that is scales smaller than $\ell_\nu$ (see figure~\ref{fig:schem_scales}).

%With this basis, we can mathematically model the available and mixed length scales discussed in the previous section.
%We denote all scalar functions $f$ whose smallest length scale is larger than or equal to $\ell_I$ (respectively, $\ell_\nu$)
%by $\Vi$ (respectively, $\Vnu$). Since the basis $\{\phi_j\}_{j\geq 1}$ is ordered, there are positive integers $n$
%and $N$ ($n\ll N$) such that 
%\begin{align}
%\Vi= &\spn\{\phi_1,\phi_2,\cdots,\phi_n\}= \{\mbox{Initially available length scales}\ \ell\geq\ell_I \},\\
%\Vnu= &\spn\{\phi_1,\phi_2,\cdots,\phi_N\}=\{\mbox{Unmixed length scales}\ \ell\geq\ell_\nu \}.
%\end{align}
%
%We denote the orthogonal complement of $\Vnu$ by $\Vnu^\perp$. In terms of the basis 
%functions, we have 
%\begin{equation}
%V_{\ell_\nu}^\perp=\overline{\spn\{\phi_{N+1},\phi_{N+2},\cdots\}}=\{\mbox{Mixed length scales}\ \ell<\ell_\nu\},
%\end{equation} 
%where the overline denotes 
%closure in the $L^2$ topology. The space $\Vnu^\perp$ consists of functions that 
%only contain the mixed scales, that is scales smaller than $\ell_\nu$ (see figure~\ref{fig:schem_scales}).

\subsection{Main result}\label{sec:MainResult}
Given an initial condition $f\in \Vi$ for the tracer, its advected 
image $\mathcal Pf\in L^2$ at the final time
can potentially contain all length scales $\ell_j$.  
The flow redistributes the `energy' budget of the tracer among various scales in such a way
that the $L^2$-norm is conserved, i.e., 
\begin{equation}
\|f\|_{L^2}=\|\mathcal P f\|_{L^2}
=\underbrace{\sum_{i=1}^{N}|\langle \mathcal Pf,\psi_i\rangle_{L^2}|^2}_\text{unmixed} 
+ \underbrace{\sum_{i=N+1}^{\infty}|\langle \mathcal Pf,\psi_i\rangle_{L^2}|^2}_\text{mixed}.
\label{eq:parseval}
\end{equation}
A tracer is better mixed if more of its energy budget is transfered to the mixed scales $\ell <\ell_\nu$.
Therefore, we seek optimal initial conditions $f\in \Vi$ such that the energy budget of its image $\mathcal P f$ 
is mostly stored in the mixed scales, maximizing
$\sum_{i=N+1}^{\infty}|\langle \mathcal Pf,\psi_i\rangle|^2$. To make
these statements more precise we introduce the following truncation of the PF operator.

\begin{defn}[Truncated Perron--Frobenius operator]
We define the truncated PF operator $\mathcal P_p:\Vi\to \Vnu$ as 
the linear map $\mathcal P_p=\proj_N\circ\mathcal P$, where $\proj_N$ is the orthogonal projection
onto the $N$-dimensional subspace $\Vnu$. We also define the remainder operator 
$\mathcal P^\perp_p:\Vi\to \Vnu^\perp$
as $\mathcal P^\perp_p=\mathcal P-\mathcal P_p$.
\label{def:PFtrunc}
\end{defn}

It follows from Parseval's identity that 
$\|\mathcal Pf\|_{L^2}^2=\|\mathcal P_pf\|_{L^2}^2+\|\mathcal P_p^\perp f\|_{L^2}^2$
(see equation~\eqref{eq:parseval}).
The quantity $\|\mathcal P_pf\|_{L^2}^2$ represents the portion of the energy budget of the tracer 
that remains unmixed after advection to the final time $t_0+T$. 
The quantity $\|\mathcal P_p^\perp f\|_{L^2}^2$, on the other hand,
represents the portion of the tracer that is completely mixed.
We, therefore, seek initial conditions $f\in \Vi$ that maximize the
mixed energy budget $\|\mathcal P_p^\perp f\|_{L^2}^2$.

Since the truncated PF operator $\mathcal P_p$ is a linear transformation between finite-dimensional vector spaces 
$\Vi$ and $\Vnu$, it can be represented by a matrix $P_p\in\mathbb R^{N\times n}$.
More specifically, for any $f\in \Vi$, there are scalars $\{\alpha_1,\cdots,\alpha_n\}$ and $\{\beta_1,\cdots,\beta_N\}$ such that 
\begin{equation}
f=\sum_{j=1}^{n}\alpha_j\phi_j\quad \mbox{and}\quad \mathcal P_pf=\sum_{i=1}^{N}\beta_i\psi_i.
\label{eq:expansions}
\end{equation}
The matrix $P_p$ maps $\pmb\alpha=(\alpha_1,\cdots,\alpha_n)^\top$ into 
$\pmb{\beta}=(\beta_1,\cdots,\beta_N)^\top$, that is $\pmb\beta =P_p\,\pmb\alpha$.
It follows from elementary linear algebra that
the entries $[P_p]_{ij}$ of the matrix $P_p$ are given by
\begin{equation}
[P_p]_{ij}=\langle \mathcal P\phi_j,\psi_i\rangle_{L^2}, \quad i\in\{1,2,\cdots,N\},\quad j\in\{1,2,\cdots,n\}.
\label{eq:Kmatrix}
\end{equation} 

With this prelude, we can now state our main result.
\begin{thm}
Consider the function spaces $\Vi$ and $\Vnu$ and their associated truncated 
PF operator defined above.
The solution of 
$$\arg\max \|\mathcal P_p^\perp f\|_{L^2},$$ 
with the maximum taken over all $f\in \Vi$ with $\|f\|_{L^2}=1$, is given by
$\fopt=\sum_{j=1}^{n}\alpha_{j}\phi_j$,
where $\pmb{\alpha}=(\alpha_1,\alpha_2,\cdots,\alpha_n)^\top$ 
is a right singular vector of the truncated PF matrix~\eqref{eq:Kmatrix} corresponding to its 
smallest singular value.
\label{thm:main}
\end{thm}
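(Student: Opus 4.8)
\emph{Proof strategy.} The plan is to peel off the constraint using unitarity of $\mathcal P$, reduce the infinite-dimensional problem over $\Vi$ to a finite-dimensional one in coordinates, and then recognize the result as the textbook variational characterization of the smallest singular value of a matrix.

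First I would exploit the structure recalled in Section~\ref{sec:prelim}: for incompressible flow $\mathcal P$ is unitary on $L^2(\mathcal D)$, so $\|\mathcal P f\|_{L^2}=\|f\|_{L^2}=1$ for every admissible $f$. Combining this with the Parseval splitting $\|\mathcal P f\|_{L^2}^2=\|\mathcal P_p f\|_{L^2}^2+\|\mathcal P_p^\perp f\|_{L^2}^2$ from equation~\eqref{eq:parseval}, the objective rewrites as $\|\mathcal P_p^\perp f\|_{L^2}^2 = 1-\|\mathcal P_p f\|_{L^2}^2$. Hence maximizing $\|\mathcal P_p^\perp f\|_{L^2}$ over the unit sphere of $\Vi$ is \emph{equivalent} to minimizing $\|\mathcal P_p f\|_{L^2}$ over that same sphere, and the maximal mixed energy budget will be $1$ minus the resulting minimum.

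Next I would pass to coordinates. Writing $f=\sum_{j=1}^n\alpha_j\phi_j$ and $\mathcal P_p f=\sum_{i=1}^N\beta_i\psi_i$ with $\pmb\beta=P_p\pmb\alpha$ as in~\eqref{eq:expansions}--\eqref{eq:Kmatrix}, orthonormality of $\{\phi_j\}$ and of $\{\psi_i\}$ gives the isometries $\|f\|_{L^2}=\|\pmb\alpha\|_2$ and $\|\mathcal P_p f\|_{L^2}=\|P_p\pmb\alpha\|_2$. The problem therefore becomes: minimize $\|P_p\pmb\alpha\|_2$ over $\pmb\alpha\in\mathbb R^n$ with $\|\pmb\alpha\|_2=1$, a minimum that is attained since the unit sphere of $\mathbb R^n$ is compact and $\pmb\alpha\mapsto\|P_p\pmb\alpha\|_2$ is continuous. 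Invoking the SVD $P_p=U\Sigma W^\top$ and substituting the unit vector $\vc y=W^\top\pmb\alpha$ gives $\|P_p\pmb\alpha\|_2^2=\sum_k\sigma_k^2 y_k^2\geq \sigma_{\min}^2$, with equality exactly when $\vc y$ is supported on the indices realizing $\sigma_{\min}$, i.e. when $\pmb\alpha$ is a right singular vector of $P_p$ for its smallest singular value. Unwinding the reduction yields $\fopt=\sum_{j=1}^n\alpha_j\phi_j$ and $\max\|\mathcal P_p^\perp f\|_{L^2}^2=1-\sigma_{\min}^2$.

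I do not expect a genuine obstacle in this argument; the two points needing care are (i) the exact equality of the $L^2$ norms with the Euclidean coordinate norms, which relies crucially on using \emph{orthonormal} bases on both the domain and target sides, and (ii) the observation that the optimal $\fopt$ need not be unique when the smallest singular value of $P_p$ is degenerate, in which case any unit vector in the corresponding right singular subspace is optimal. Both can be dispatched in a sentence each.
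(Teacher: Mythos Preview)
Your proposal is correct and follows essentially the same route as the paper's own proof: reduce the maximization of $\|\mathcal P_p^\perp f\|_{L^2}$ to the minimization of $\|\mathcal P_p f\|_{L^2}$ via unitarity and the Parseval splitting, pass to Euclidean coordinates using orthonormality of the bases, and identify the minimizer as a right singular vector of $P_p$ for its smallest singular value. Your added remarks on existence by compactness and on non-uniqueness under a degenerate $\sigma_{\min}$ are sound refinements that the paper leaves implicit.
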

\begin{proof}
Since $\| \mathcal P_p^\perp f\|_{L^2}^2 = \| \mathcal P f\|_{L^2}^2-\| \mathcal P_p f\|_{L^2}^2=1-\| \mathcal P_p f\|_{L^2}^2$,
maximizing $\| \mathcal P_p^\perp f\|_{L^2}^2$ is equivalent to minimizing $\| \mathcal P_p f\|^2$.
Since $f$ belongs to the subspace $\Vi$, 
the initial condition $f$ and its image $\mathcal P_pf$ can be expressed
by the series~\eqref{eq:expansions} with $\pmb{\beta}=P_p\,\pmb{\alpha}$.
Denoting the standard Euclidean norm by $|\cdot|$, 
we have $|\pmb{\alpha}|^2=\|f\|_{L^2}^2=1$ and $|\pmb{\beta}|^2=|P_p\,\pmb{\alpha}|^2=\| \mathcal P_p f\|_{L^2}^2$.
Therefore,
\begin{equation}
\min_{f\in \Vi, \|f\|=1}\| \mathcal P_p f\|_{L^2}=\min_{|\pmb{\alpha}|=1}|P_p\,\pmb{\alpha}|.
\end{equation}
The minimum on the right hand side is well-known to 
coincide with the smallest singular value of the matrix $P_p$~\citep{stewart98}.
The minimum is attained at the corresponding right singular vector of the matrix $P_p$.
This completes the proof.
\end{proof}

Once the PF matrix $P_p$ is formed, the evaluation of the optimal initial
condition $\fopt$, from the above theorem, is straightforward. We outline the 
computation of the truncated PF matrix $P_p$ in section~\ref{sec:numerics}.

\begin{rem}
Note that if the matrix $P_p$ is not full-rank, there are initial conditions
$f$ of the form~\eqref{eq:expansions} with $|\pmb\alpha|= 1$, 
such that $|P_p\,\pmb\alpha|=0$. Such initial conditions result in `perfect mixing'
since their advected image $\mathcal Pf$ belongs entirely to the mixed scales $\ell<\ell_\nu$, i.e.,
$\mathcal Pf\in V_\nu^\perp$.
In the examples studied in Section~\ref{sec:results}, such perfect finite-time mixing was not
observed. In other words, the matrices $P_p$ are full-rank in these examples.
\end{rem}

\begin{rem}
We emphasize that the truncated PF operator $\mathcal P_p$
is \emph{not} used as an approximation of the full PF operator $\mathcal P$. 
Instead, the truncation $\mathcal P_p$ followed naturally from the physical assumption that length scales
$\ell<\ell_\nu$ are completely mixed. 
As is clear from equation~\eqref{eq:Kmatrix}, to evaluate the
truncation $\mathcal P_p$, one still needs to utilize the full PF operator to evaluate the terms $\mathcal P\phi_j$.
%Obviously, to carry out numerical computations, the infinite-dimensional operator $\mathcal P$ needs to be 
%approximated with reasonable accuracy (see, e.g.,
%\cite{ulam64,dellnitz99,dellnitz01,Bollt02}).
\end{rem}

\section{Numerical implementation}\label{sec:numerics}
Numerical computation of the optimal initial condition $\fopt$ relies on the scale-dependent 
bases $\{\phi_i \}_{i\geq 1}$ and $\{\psi_i \}_{i\geq 1}$. 
For completeness, we discuss two such bases: the Fourier basis and
the Haar wavelet basis. Since the examples considered in Section~\ref{sec:results} below 
are defined on equilateral two-dimensional domains, $\mathcal D=[0,L]\times [0,L]$, we focus
on this special case. The generalization to the rectangular domain and 
to the three-dimensional case is straightforward.

\subsection{Fourier basis}
For periodic boundary conditions, it is natural to use the Fourier basis to 
define the spaces $\Vi$ and $\Vnu$. The orthonormal Fourier basis associated with the two-dimensional domain 
$\mathcal D=[0,L]\times [0,L]$ consist of functions $(1/L)\exp\left[\i(2\pi/L)(\vc k\cdot \vc x)\right]$ where 
$\vc k\in \mathbb Z^2$ denotes the wave vector.
The length scale associated to each Fourier mode is inversely proportional to 
the wave number, $|\vc k|\sim\ell^{-1}$.
We take the space of available initial scalar fields to be the functions
whose Fourier modes contain at most a prescribed wave number $k_I\sim\ell_I^{-1}$, 
i.e., 

\begin{equation}
\Vi=\spn\left\{\frac{1}{L}\exp\left[\i\frac{2\pi}{L}(\vc k\cdot \vc x)\right] : 
\vc k=(k_x,k_y)\in \mathbb Z^2, |k_x|\leq k_I, |k_y|\leq k_I\right\}.
\label{eq:VI_kI}
\end{equation}
Similarly, the space of unmixed scales $V_\nu$ is the functions
whose Fourier modes contain at most a prescribed wave number $k_\nu\sim\ell_\nu^{-1}\gg k_I$, i.e., 
\begin{equation}
\Vnu=\spn\left\{\frac{1}{L}\exp\left[\i\frac{2\pi}{L}(\vc k\cdot \vc x)\right] : 
\vc k=(k_x,k_y)\in \mathbb Z^2, |k_x|\leq k_\nu, |k_y|\leq k_\nu\right\}.
\label{eq:Vnu_knu}
\end{equation}
More generally, one could define the space $\Vi$ (and similarly $\Vnu$) with independent upper
bounds $k_{I_x}$ and $k_{I_y}$ on the wave-numbers $|k_x|$ and $|k_y|$, respectively. 
Since the domain is equilateral,
and for simplicity, we choose the same upper bounds in both directions, $k_I=k_{I_x}=k_{I_y}$.

Since the tracer concentration is real-valued, the complex conjugate basis functions in $\Vi$ and $\Vnu$ are redundant. 
Also, the basis with $\vc k=\vc 0$ (corresponding to constant functions) is unnecessary since we assumed
that the tracer has zero mean. Excluding these redundant functions, the effective dimension 
of the vector spaces $\Vi$ and $\Vnu$ are $n=\dim \Vi=2k_I(k_I+1)$ and $N=\dim \Vnu=2k_\nu(k_\nu+1)$, respectively.

\subsection{Wavelet basis}
While the above Fourier basis is a convenient choice, it restricts its applicability to the periodic boundary conditions.
More general boundary conditions can be handled with an alternative basis, such as 
Haar wavelets. Such wavelet bases have the
added advantage that they can be localized in space in addition to scale. This property renders wavelets particularly
attractive in applications where the tracer can only be released into a subset of the fluid domain $\mathcal D$
due to geometric or design constraints. Contrast this with the global nature of the Fourier basis.
\begin{figure}
\centering
\includegraphics[width=.85\textwidth]{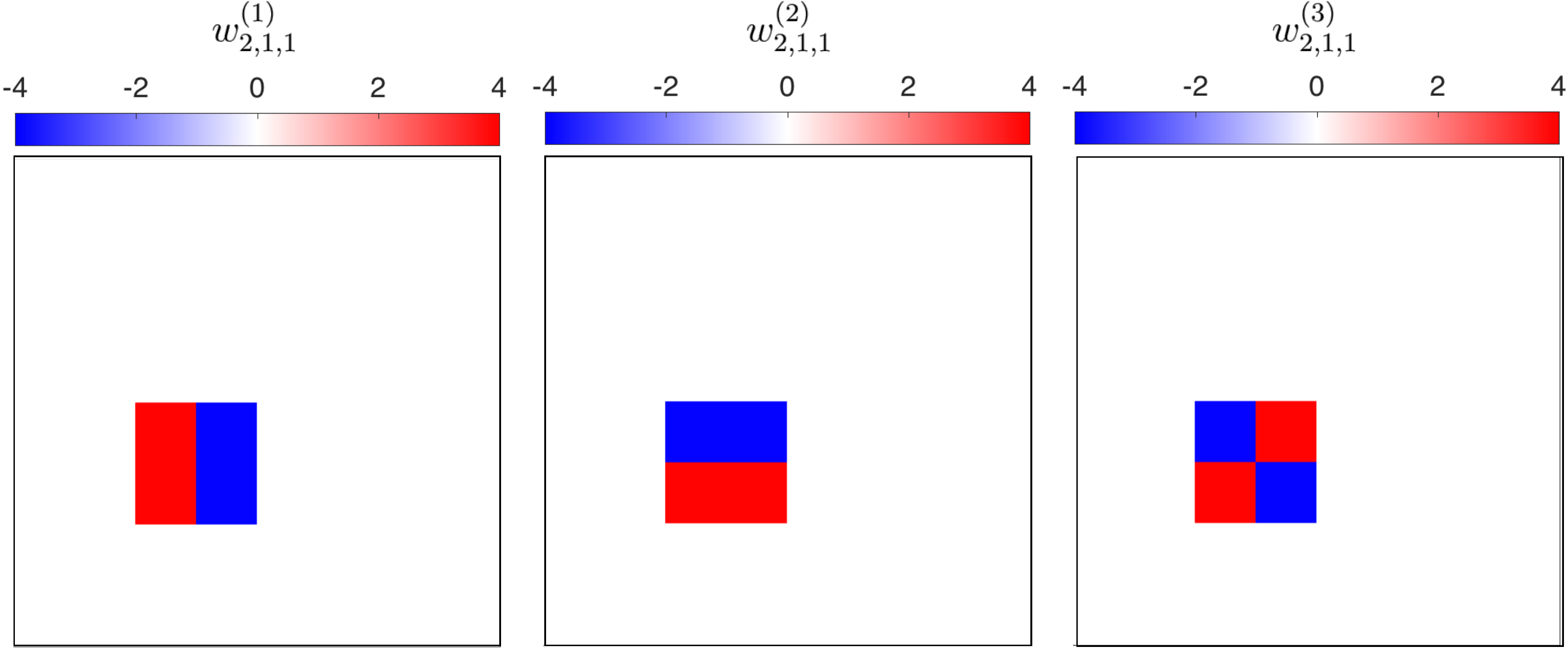}
\caption{Three examples of the wavelet functions~\eqref{eq:wavelet_2d_02} with
$j=2$, $i_x=i_y=1$. The domain is $\mathcal D=[0,1]\times[0,1]$}
\label{fig:wavelets}
\end{figure}

Here, we consider the Haar wavelet basis. For completeness, we briefly review the construction of this basis in 
two dimensions. Denote the one-dimensional Haar scaling function
with $s(x)=\mathds 1_{[0,1)}(x)$ 
and the corresponding wavelet with $h(x)=\mathds 1_{[0,1/2)}(x)-\mathds 1_{[1/2,1)}(x)$ where $\mathds 1_A$ is the 
indicator function of the set $A$. By dilations and translations, we obtain
\begin{subequations}
\begin{equation}
s_{j,i}(x)=2^{j/2}s\left(2^j\frac{x}{L}-i\right),\quad j\geq 0,\quad i\in\{0,1,\cdots, 2^j-1\},
\label{scaling_1d}
\end{equation}
\begin{equation}
h_{j,i}(x)=2^{j/2}h\left(2^j\frac{x}{L}-i\right),\quad j\geq 0,\quad i\in\{0,1,\cdots, 2^j-1\},
\label{wavelet_1d}
\end{equation}
\end{subequations}
where $0\leq x\leq L$ for a domain of size $L$.
The collection of the wavelets $h_{j,i}$ forms an orthogonal basis for mean-zero functions in 
$L^2([0,L])$~\citep{daubechies92}. 
The integer $j$ determines the size of the support of $h_{j,i}$ (or $s_{j,i}$) which is $L\times 2^{-j}$. Since
the wavelets with larger $j$ resolve finer structures (or smaller length scales), the 
integer $j$ is referred to as the scale of the wavelet. The integer $i$,
on the other hand, introduces a translation in the support of each wavelet, introducing a space dependence 
at each scale $j$.

The functions $s_{j,i}$ and $h_{j,i}$ serve as the building blocks of 
multidimensional wavelet bases~\citep{daubechies92,farge1999}. For instance, a
complete orthonormal basis for mean-zero functions in $L^2(\mathcal D)$, with $\mathcal D=[0,L]\times [0,L]$,
is formed by the set of functions 
\begin{align}
\big\{w^{(\mu)}_{j,i_x,i_y}:\ & 1\leq \mu \leq 3,\ 0\leq j,\  0\leq i_x\leq 2^{j}-1,\  0\leq i_y\leq 2^{j}-1 \big\},
\label{eq:wavelet_2d}
\end{align}
where 
\begin{subequations}
\begin{equation}
w_{j,i_x,i_y}^{(1)}(x,y)= \frac{1}{L}h_{j,i_x}(x)s_{j,i_y}(y),
\end{equation}
\begin{equation}
w_{j,i_x,i_y}^{(2)}(x,y)= \frac{1}{L}s_{j,i_x}(x)h_{j,i_y}(y),
\end{equation}
\begin{equation}
w^{(3)}_{j,i_x,i_y}(x,y)= \frac{1}{L}h_{j,i_x}(x)h_{j,i_y}(y).
\end{equation}
\label{eq:wavelet_2d_02}
\end{subequations}
The prefactor $1/L$ ensures that each basis function is of unit norm. The integer $j$ determine the 
scale in both $x$ and $y$ directions, while the integers $i_x$ and $i_y$ introduce the corresponding translations.
Figure~\ref{fig:wavelets} shows three examples of the two-dimensional wavelet functions~\eqref{eq:wavelet_2d_02}
with $j=2$.
The construction of two-dimensional wavelet bases from one-dimensional wavelets is not unique. 
For an alternative wavelet basis see, e.g., Chapter 10 of~\cite{daubechies92}.

Using the wavelet basis~\eqref{eq:wavelet_2d}, we define the subspace of initial conditions $\Vi$ as
\begin{align}
\Vi=\spn\big\{w^{(\mu)}_{j,i_x,i_y}:\ & 1\leq \mu\leq 3,\ 0\leq j\leq J_I-1 \nonumber\\
 & 0\leq i_x\leq 2^{j}-1,\  0\leq i_y\leq 2^{j}-1 \big\},
\label{eq:VI_wlet}
\end{align}
where the integer $J_I$ sets the initially available length scales.
Roughly speaking, the wavelet subspace $V_I$ 
contains tracer blobs of size $\ell_I=L\times 2^{-J_I}$ or larger.
Similarly, we define the subspace of unmixed length scales by
\begin{align}
\Vnu=\spn\big\{w^{(\mu)}_{j,i_x,i_y}:\ & 1\leq \mu\leq 3,\ 0\leq j\leq J_\nu-1 \nonumber\\
 & 0\leq i_x\leq 2^{j}-1,\  0\leq i_y\leq 2^{j}-1 \big\},
\label{eq:Vnu_wlet}
\end{align}
containing the unmixed tracer blobs of size $\ell_\nu=L\times 2^{-J_\nu}$ or larger.
For given positive integers $J_I$ and $J_\nu$, we have $n=\dim \Vi=4^{J_I}-1$ and $N=\dim \Vnu =4^{J_\nu}-1$.

Recall that the basis functions $\phi_i$ spanning the domain $\Vi$ of the truncated PF operator $\mathcal P_p$
need not to be identical to the basis functions $\psi_i$ spanning its range $\Vnu$. As a result,
the Fourier-based subspaces~\eqref{eq:VI_kI} and~\eqref{eq:Vnu_knu} can be used 
in conjunction with the wavelet-based subspaces~\eqref{eq:VI_wlet} and ~\eqref{eq:Vnu_wlet}.
In the following, we consider examples with both Fourier-based and 
wavelet-based subspaces~\eqref{eq:VI_kI} and~\eqref{eq:VI_wlet} for defining the domain $\Vi$.
For the range $\Vnu$, however, we only consider the Fourier-based subspace~\eqref{eq:Vnu_knu}
in order to achieve speedup in the computations by taking advantage of the fast Fourier transform \texttt{FFTW}.
\\

Once the choice of bases is made, the truncated PF matrix~\eqref{eq:Kmatrix} can be computed
by evaluating the integral,
\begin{equation}
[P_p]_{ij}=\langle \mathcal P\phi_j,\psi_i \rangle_{L^2}:=\int_{\mathcal D}(\mathcal P\phi_j)(\vc x)[\psi_i(\vc x)]^\ast\id \vc x,
\label{eq:Kmatrix_2}
\end{equation}
where $\ast$ denotes the complex conjugation. We approximate this integral using
the standard trapezoidal rule~\citep{recipe}. To ensure the accuracy of the approximation, 
the results reported in section~\ref{sec:results} are computed
using a dense uniform grid $\mathcal G$ of $2048\times 2048$ collocation points over 
the domain $\mathcal D=[0,L]\times [0,L]$. The terms $\mathcal P\phi_j$ are computed from 
the definition of the PF operator (Definition~\ref{def:PF}), i.e.,
$(\mathcal P\phi_j)(\vc x_0)=\phi_j(\fm^{-1}(\vc x_0))$ for any
$\vc x_0\in\mathcal G$. (see~\cite{dellnitz01}, for more accurate numerical methods).

\section{Examples and discussion}\label{sec:results}
\subsection{A time-periodic model}\label{sec:sineMap}
As the first example, we consider the time-periodic sine flow~\citep{liu94,pierrehumbert94}. 
This model is simple enough to unambiguously demonstrate our results, yet it can exhibit complex 
dynamics with simultaneous presence of chaotic mixing and coherent vortices.

The sine flow has a spatially sinusoidal velocity field on the domain 
$(x,y)\in[0,1]\times [0,1]$ with periodic boundary conditions. 
The temporal period of the flow is $2\tau$ for some $\tau>0$.
During the first $\tau$ time units, the velocity field is $\vc u=(0,\sin(2\pi x))^\top$
and switches instantly to $\vc u=(\sin(2\pi y),0)^\top$ for the second $\tau$
time units. This process repeats iteratively.
\begin{figure}
\centering
\includegraphics[width=.7\textwidth]{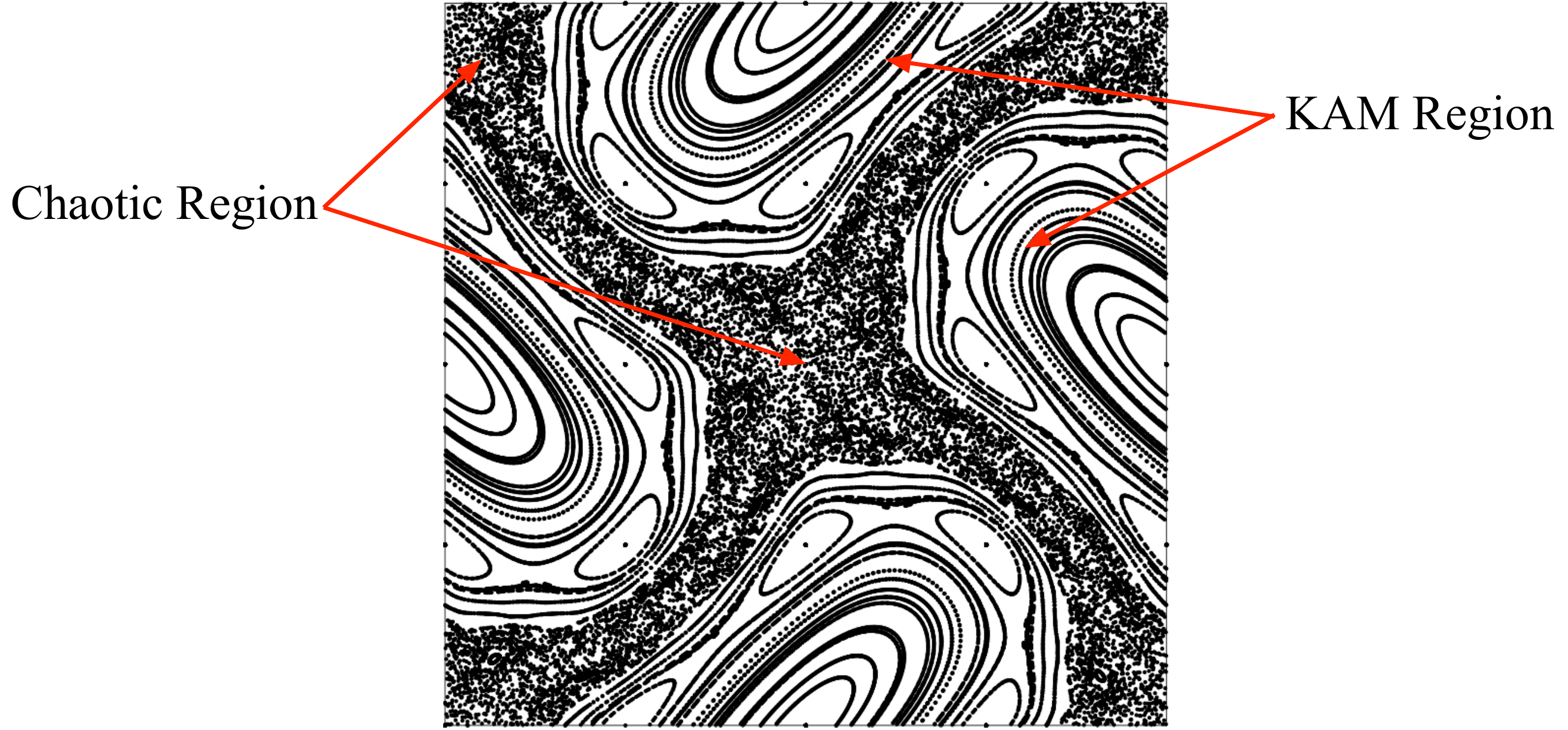}
\caption{Two hundred iterations of the sine map~\eqref{eq:map} with $\tau=0.25$ starting from $16\times 16$
initial conditions distributed uniformly over the domain $[0,1]\times [0,1]$.
The domain is periodic in both directions.}
\label{fig:pmap}
\end{figure}

The sine flow generates a reversible map $T$ that, over one period, maps 
points $(x,y)$ to $T(x,y)$. The inverse of the map $T$ is given explicitly by~\citep{gubanov10}
\begin{equation}
T^{-1}:
\begin{pmatrix}
x\\ y
\end{pmatrix}
\mapsto
\begin{pmatrix}
x-\tau \sin(2\pi y)\\ 
y-\tau \sin\big(2\pi (x-\tau \sin(2\pi y))\big)
\end{pmatrix}
\mod{1}.
\label{eq:map}
\end{equation}

Figure~\ref{fig:pmap} shows $200$ iterations of this map
with $\tau=0.25$ launched from a uniform grid of initial conditions. The map
$T^{-1}$ has two hyperbolic fixed points located at $(0,0)$ and $(0.5,0.5)$ whose
tangle of stable and unstable manifolds creates a chaotic mixing region.
In addition, the map has two elliptic fixed points located at $(0.5,0)$ and
$(0,0.5)$. These elliptic fixed points are surrounded by invariant 
Kolmogorov--Arnold-Moser (KAM) tori with quasi-periodic motion that inhibit mixing~\citep{topolHydro_arnold}.
\begin{figure}
\centering
\includegraphics[width=\textwidth]{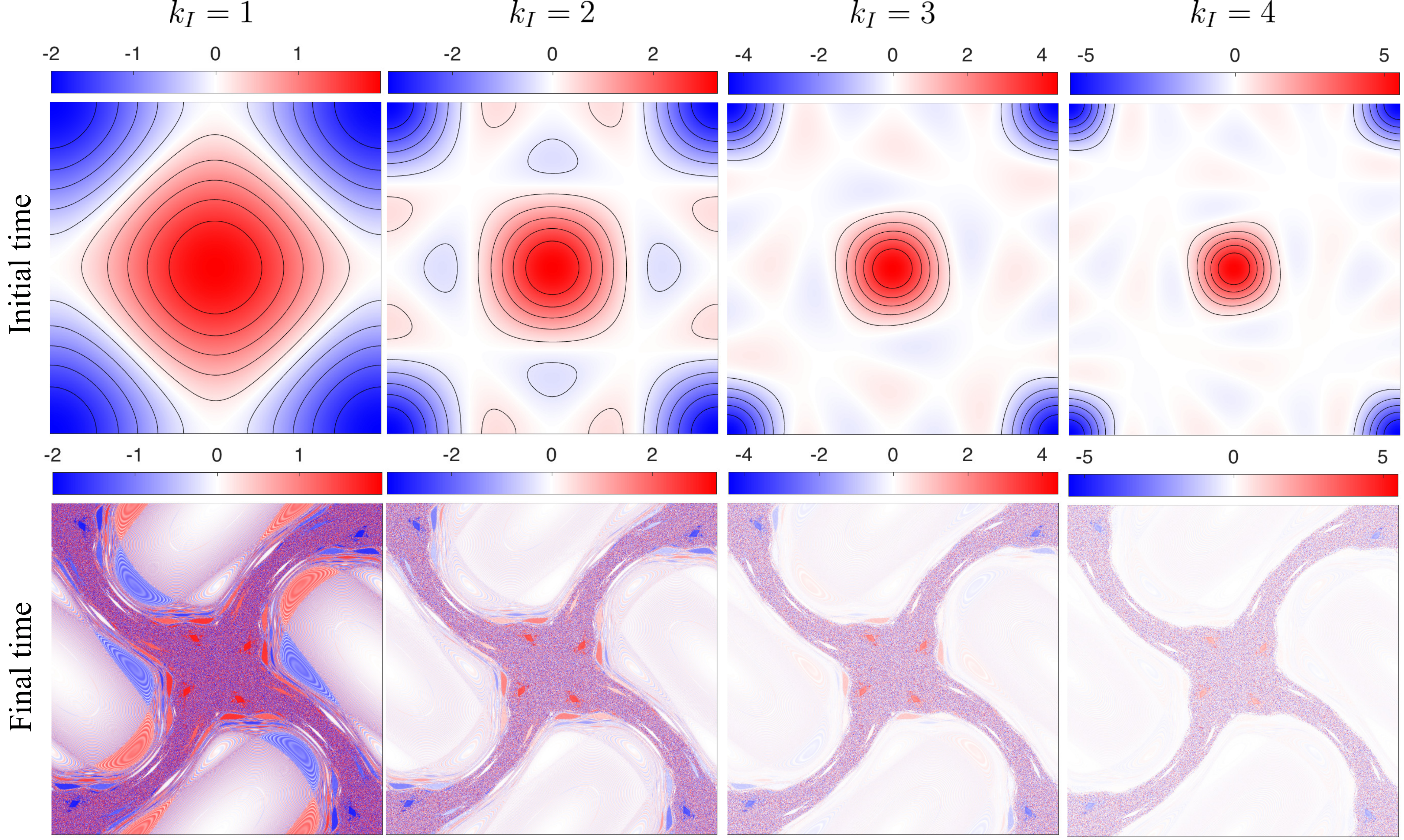}
\caption{The optimal initial conditions $\fopt\in\Vi$ for the sine map, with
$\Vi$ being the Fourier-based subspace defined in~\eqref{eq:VI_kI}.
Four optimal initial conditions with $k_I=1$, $2$ , $3$ and $4$
are shown in the top panel. The range of the truncated PF operator 
$\mathcal P_p$ is the Fourier-based subspace $\Vnu$ with $k_\nu=256$.
The bottom panel shows their corresponding
advected image $\mathcal P\fopt$ under $200$ iterations of the sine map.
All figures show the entire domain $\mathcal D=[0,1]\times [0,1]$.}
\label{fig:optIC}
\end{figure}

It is known that mixing is more efficient around the hyperbolic fixed points due to their
tangle of stable and unstable manifolds~\citep{aref}. The KAM regions, in contrast, form islands of coherent motion 
that inhibit efficient mixing of passive tracers. Therefore, it is desirable to release the tracer blobs around the hyperbolic fixed points, avoiding
the KAM region. Here, we examine whether the optimal initial condition given by Theorem~\ref{thm:main}
agrees with this intuitive assessment.
\begin{figure}
\centering
\includegraphics[width=.75\textwidth]{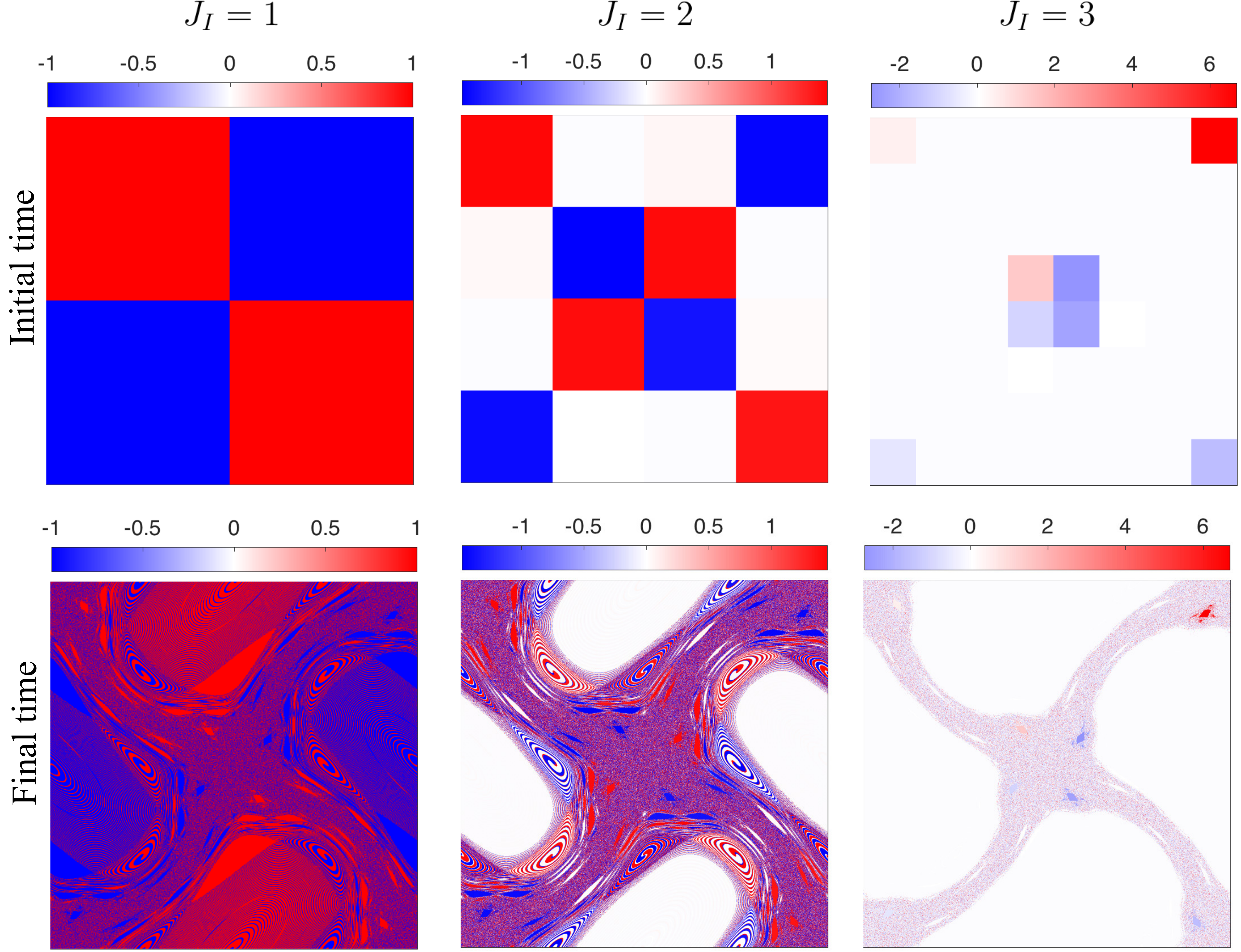}
\caption{The optimal initial conditions $\fopt\in\Vi$ for the sine map.
Four optimal initial conditions with $J_I=1$, $2$ , $3$
are shown in the top panel. The bottom panel shows their corresponding
advected images $\mathcal P\fopt$ under $200$ iterations of the sine map.
The range of the truncated PF operator 
$\mathcal P_p$ is the Fourier-based subspace $\Vnu$ with $k_\nu=256$.
All figures show the entire domain $\mathcal D=[0,1]\times [0,1]$.}
\label{fig:optIC_wavelet}
\end{figure}

For the finite-time analysis, we consider the flow under $200$ iterations
of the sine map, i.e. we set the flow map $\fm = T^{200}$. 
First, we consider the Fourier-based initial subspace $\Vi$ defined in~\eqref{eq:VI_kI}.
Figure~\ref{fig:optIC} shows the optimal initial conditions obtained from Theorem~\ref{thm:main}
with $k_I=1,2,3$ and $4$. For all parameter values $k_I$, the optimal initial condition consists of two
prominent blobs centered at the hyperbolic fixed points $(0,0)$ and $(0.5,0.5)$.
For $k_I=1$, only very large scales are available for the distribution of the tracer blob
and therefore some intersection with the KAM region is inevitable.
As the number of available wave numbers $k_I$ (or equivalently, available initial length scales)
increases the blobs become more concentrated at the hyperbolic fixed points. 

Even for $k_I=4$, the optimal initial condition has very small but non-zero concentration 
in the KAM regions. This is due to the global nature of the Fourier modes which 
inhibits the perfect localization around the hyperbolic fixed points. 
The wavelet-bases subspace~\eqref{eq:VI_wlet} does not suffer from this drawback. 
Figure~\ref{fig:optIC_wavelet}, for instance, shows three optimal initial conditions
in this wavelet-based subspace. 
For $J_I=1$, where only the largest scales are available, intersection
with the KAM region is inevitable (similar to the case of $k_I=1$ in figure~\ref{fig:optIC}).
As the smaller scales become available, the optimal initial condition $\fopt$ concentrates
around the hyperbolic fixed points with no concentration at the KAM regions.

The results in figures~\ref{fig:optIC} and \ref{fig:optIC_wavelet} are computed using the
Fourier-based subspace $\Vnu$ with $k_\nu=256$. To ensure the insensitivity
of the results to perturbations, we recomputed them
by varying the cut-off wavenumber in the interval $250\leq k_\nu\leq 260$ and obtained 
almost identical optimal initial conditions.

\subsection{Two-dimensional turbulence}\label{sec:2Dturb}
As the second example, we consider a fully unsteady flow obtained from a direct numerical 
simulation of the two-dimensional Navier--Stokes equation,
\begin{equation}
\partial_t \vc u +\vc u\cdot \bnabla\vc u = -\bnabla p +\nu \Delta \vc u +\vc F,\quad \bnabla\cdot\vc u=0,
\end{equation}
with the dimensionless viscosity $\nu=10^{-5}$ and a band-limited stochastic forcing $\vc F$. 
The flow domain is the box $\mathcal D=[0,2\pi]\times[0,2\pi]$
with periodic boundary conditions. A standard pseudo-spectral code 
with 2/3 dealiasing was used to numerically solve the Navier--Stokes equations
(see Section 6.2 of~\citet{pra} for further computational details).
\begin{figure}
\centering
\includegraphics[width=.95\textwidth]{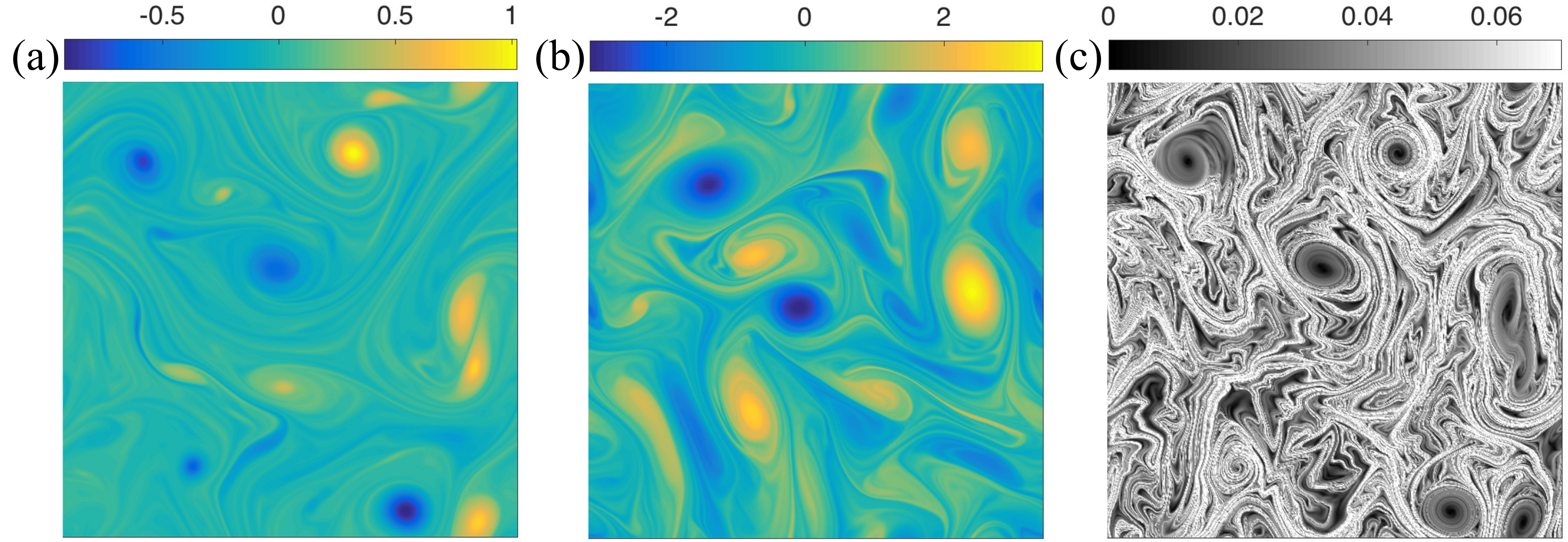}
\caption{(a) The vorticity field at the initial time $t_0$. (b) The vorticity field at the final time $t_0+T$.
(c) The forward-time FTLE field corresponding to the time interval $[t_0,t_0+T]$. Here, $T$
is $100$ time units.}
\label{fig:2Dturb_vort}
\end{figure}

Starting from a random-phase initial
condition, we numerically integrate the Navier--Stokes equation. After
$1000$ time units the flow has reached a statistically steady turbulent state with Reynolds number $4.1\times 10^3$. 
We set this time as the initial time $t_0$ for the
mixing analysis. The final time instance is set to $t_0+T$ with $T=100$.
Figures~\ref{fig:2Dturb_vort}(a,b) show the vorticity fields at these initial and final times. 

As is typical of two-dimensional turbulence, the flow contains several coherent vortices that 
exhibit minimal material deformation over the time interval $[t_0,t_0+T]$~\citep{mcwilliams1984vortices}. These coherent
vortices are signaled by the islands of small finite-time Lyapunov exponent (FTLE)
shown in figure~\ref{fig:2Dturb_vort}(c). The FTLE field is computed as
$\log [\lambda(\vc x)]/(2T)$ for all $\vc x\in\mathcal D$, with $\lambda(\vc x)$ being the largest 
eigenvalue of the Cauchy--Green strain tensor $[\id \fm(\vc x)]^\top\id\fm (\vc x)$, and $\id\fm$ 
denoting the Jacobian of the flow map~\citep{voth02}.
Outside the coherent vortices the flow is mostly chaotic, dominated by the stretching
and folding of material lines.
\begin{figure}
\centering
%\includegraphics[width=.24\textwidth]{turb_T100_n2048_ki1kf512_f}
%\includegraphics[width=.24\textwidth]{turb_T100_n2048_ki2kf512_f}
%\includegraphics[width=.24\textwidth]{turb_T100_n2048_ki3kf512_f}
%\includegraphics[width=.24\textwidth]{turb_T100_n2048_ki4kf512_f}\\
%\includegraphics[width=.24\textwidth]{turb_T100_n2048_ki1kf512_Kf}
%\includegraphics[width=.24\textwidth]{turb_T100_n2048_ki2kf512_Kf}
%\includegraphics[width=.24\textwidth]{turb_T100_n2048_ki3kf512_Kf}
%\includegraphics[width=.24\textwidth]{turb_T100_n2048_ki4kf512_Kf}
%\hspace{.18\textwidth}
\includegraphics[width=\textwidth]{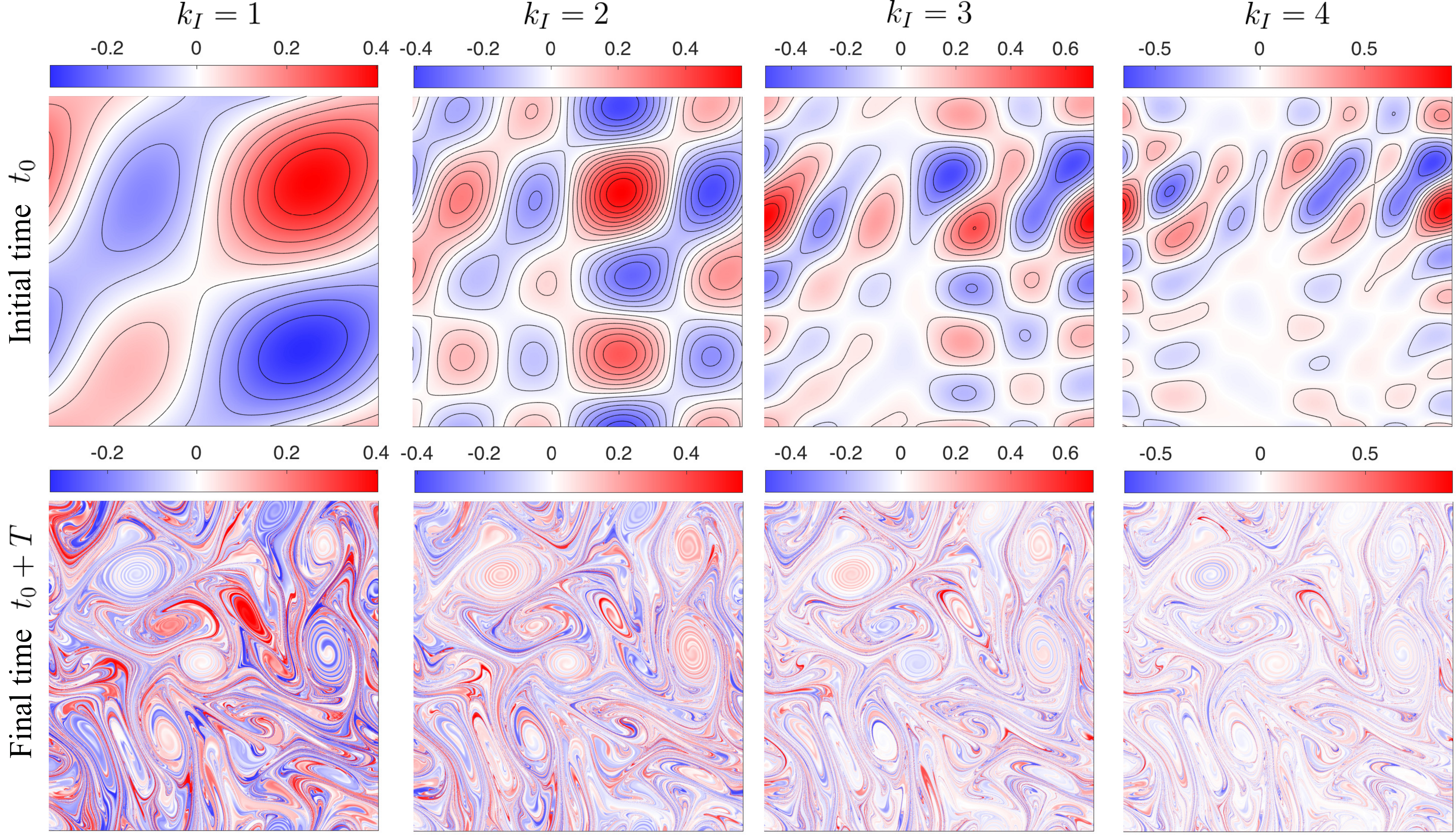}
\caption{Optimal initial tracers $\fopt\in \Vi$ (upper panel) in the turbulent flow with $k_I=1,2,3$ 
and $4$. Their corresponding advected images $\mathcal P\fopt$ at the final time $t_0+T$ are 
shown in the lower panel. In all four cases shown here, the mixed wavenumber is $k_\nu=256$ (see equation~\eqref{eq:Vnu_knu}).
All panels show the entire domain $\mathcal D=[0,2\pi]\times [0,2\pi]$.}
\label{fig:2Dturb}
\end{figure}

Next we compute the optimal initial conditions $\fopt$.
Unlike the sine map, the preimages $\fm^{-1}(\vc x_0)$ are not explicitly known
here. We numerically evaluate the preimages by integrating
the ODE~\eqref{eq:ode} backwards in time from the final time $t_0+T$ to the initial time $t_0$,
for each initial condition $\vc x_0\in\mathcal G$.
This numerical integration is carried out by the fifth-order Runge-Kutta scheme
of~\cite{RK45}. Since the velocity field $\vc u$ is stored on a discrete spatiotemporal 
grid, it needs to be interpolated for the particle advection. 
Here, we use cubic splines for the spatial interpolation of the velocity field 
together with a linear interpolation in time.
\begin{figure}
\centering
\includegraphics[width=.6\textwidth]{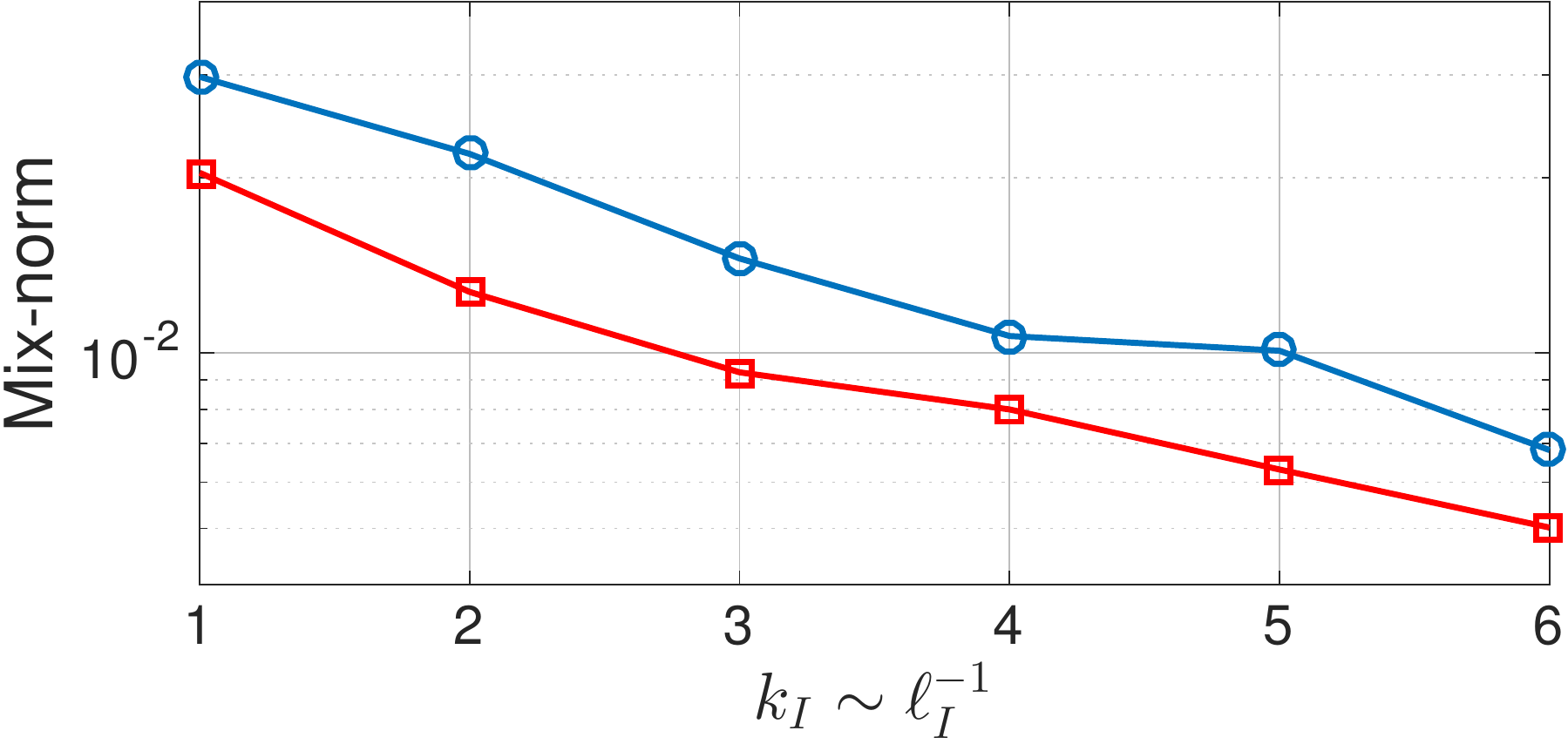}
\caption{Red line (squares): The mix-norm $\|\mathcal P\fopt\|_{H^{-1}}$ 
for the optimal initial concentrations $\fopt\in \Vi$. The optimal initial conditions for $k_I=1$, $2$, $3$ and $4$
are shown in figure~\ref{fig:2Dturb}.
Blue line (circles): The mix-norm $\|\mathcal Pf\|_{H^{-1}}$ for the initial concentrations 
$f(x,y)=\cos(k_Ix)\cos(k_Iy)/\pi$.
\label{fig:mixNorm}
}
\end{figure}

Figure~\ref{fig:2Dturb} shows the optimal initial tracer patterns $\fopt$ for 
$k_I=1,2,3$ and $4$, which belong to the corresponding Fourier-based subspaces $\Vi$
as defined in equation~\eqref{eq:VI_kI}. 
As opposed to the simple model considered in Section~\ref{sec:sineMap},
the optimal tracer patterns here have fairly complicated structures. This is to be expected 
as the turbulent flow itself has a complex spatiotemporal structure.

Ideally, the tracer should concentrate outside the coherent vortices to achieve 
better mixing. Similar to the sine flow, for $k_I=1$, where only the very large scales are available 
for the release of the tracer, there is some inevitable overlap between the coherent vortices and 
the tracer. This results in the visibly unmixed blobs in the advected tracers $\mathcal P\fopt$
shown in the lower panel of figure~\ref{fig:2Dturb}. Theorem~\ref{thm:main}, however, guarantees
that the optimal initial condition $\fopt$ is such that
the unmixed blobs are minimal. 
As smaller scales become available ($k_I>1$), the intersection of the high initial tracer concentration
and the coherent vortices becomes smaller, leading to a more homogeneous mixture after
advection to the final time $t_0+T$.
%While the FTLE field
%does distinguish between chaotic and coherent
%regions, it does not determine the optimal configuration of the initial tracer patterns.
%
\begin{figure}
\centering
\includegraphics[width=.75\textwidth]{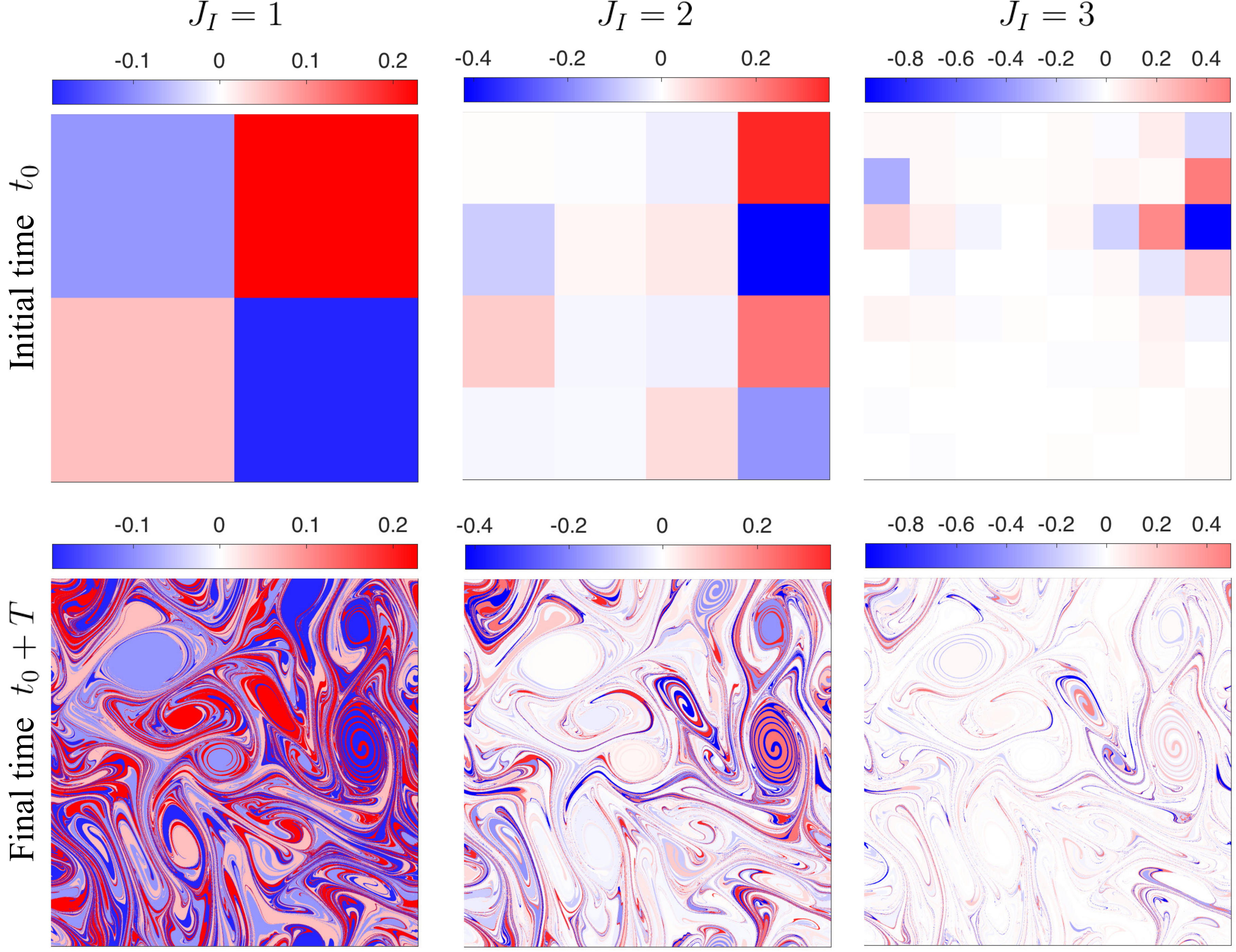}
\caption{Optimal initial tracers $\fopt\in \Vi$ (upper panel) in the turbulent flow with $J_I=1,2$ and $ 3$. 
Their corresponding advected images $\mathcal P\fopt$ at the final time $t_0+T$ are 
shown in the lower panel. In all four cases shown here, the mixed wavenumber is $k_\nu=256$ (see equation~\eqref{eq:Vnu_knu}).
All panels show the entire domain $\mathcal D=[0,2\pi]\times [0,2\pi]$.}
\label{fig:2Dturb_wavelet}
\end{figure}

To quantify the mixture qualities, we compute the mix-norm
of the advected tracers proposed by~\cite{shaw07}. This mix-norm is the Sobolev $H^{-1}$ norm, 
\begin{equation}
\|\rho\|_{H^{-1}}=\sqrt{\sum_{\vc k\neq \vc 0}|\widehat{\rho}(\vc k)|^2/|\vc k|^2},
\label{eq:mix-norm}
\end{equation}
where the hat sign denotes the Fourier transform. 
\cite{mathew07} proposed the alternative Sobolev norm $H^{-1/2}$ for quantifying the mixture quality.
The motivation for using such Sobolev norms is that the density of homogeneous mixtures
are concentrated at ever smaller scales or equivalently larger wave numbers $|\vc k|$.
As a result, more homogeneous mixtures have smaller mix-norms. 

The mix-norm $\|\mathcal P\fopt\|_{H^{-1}}$ is shown
in figure~\ref{fig:mixNorm} for the optimal initial conditions $\fopt\in\Vi$ with $1\leq k_I\leq 6$.
As $k_I$ increases, more homogeneous mixtures are obtained, as is also visible in figure~\ref{fig:2Dturb}.
For comparison, we also show the mix-norm $\|\mathcal Pf\|_{H^{-1}}$ for 
the non-optimal initial conditions $f(x,y)=\cos(k_Ix)\cos(k_Iy)/\pi\in\Vi$. The
non-optimal initial conditions result into a larger mix-norm, showing that they 
do not mix as well as the optimal initial conditions $\fopt$ do. 
Figure~\ref{fig:optIC_wavelet} shows the optimal initial conditions found
in the wavelet-based subspace~\eqref{eq:VI_wlet}. Their mix-norms exhibit a 
similar behavior as the one shown in figure~\ref{fig:mixNorm}.

\section{Concluding remarks}
The design of mixing devices has primarily been concerned with the 
stirring protocols that enhance mixing. The optimality of these protocols are limited by
design constraints and fundamental physics~\citep{lin11}. On the other hand, for a given stirring protocol, 
the final mixture quality also depends on the initial configuration of the 
tracer. The optimal initial condition for the release of the tracer has received far less attention.

Here, we proposed a rigorous framework for determining the
optimal initial tracer configuration to achieve maximal mixing under finite-time passive advection. 
We showed that, under reasonable assumptions, the problem reduces to a finite-dimensional 
optimization problem. The optimal initial condition then coincides with 
a singular vector of a truncated Perron--Frobenius (PF) operator.
This truncation is not an approximation of the infinite-dimensional PF operator; it rather 
follows naturally from our simplifying assumption that the tracer blobs smaller than
a prescribed critical length scale $\ell_\nu$ are completely mixed.

We discussed two numerical implementations of the 
optimization problem using Fourier modes and Haar wavelets.
While the Fourier modes are convenient for the spatially periodic flows considered
here, the wavelets are more suitable for handing more complicated geometries 
and boundary conditions. Wavelets also allow for optimal initial conditions
that are local in both space and scale. The space localization is crucial in many
applications where the tracer can only be released into a subset of the flow domain
duo to geometric constraints.

We restricted our attention here to ideal passive tracers.
Future work will expand the framework 
to account for diffusion and the presence of sinks and sources.
Diffusion, in particular, dictates a dissipative length scale $\ell_\nu$ for mixed blobs which, 
in the absence of diffusion, was prescribed here in an ad-hoc manner.

\ \\
\textbf{Acknowledgments:} 
I would like to thank Daniel Karrasch for pointing out the correct terminology in Definition~\ref{def:PF}
and bringing a number of relevant references to my attention.
I am also grateful to Charles Doering, Gary Froyland, George Haller
and Christopher Miles for their comments on this manuscript.

\end{document}